\newtheorem{Theorem}{Theorem}[section]
\newtheorem{lem}[Theorem]{Lemma}
\numberwithin{equation}{section}
\numberwithin{table}{section}
\begin{document}

\title{New constructions of  MDS codes with complementary duals}

\insert\footins{\footnotesize {\it Email addresses}:
{bocong\_chen@yahoo.com (B. Chen)},
 hwliu@mail.ccnu.edu.cn (H. Liu). }

\author{Bocong Chen$^{1}$,~~Hongwei Liu$^{2}$}

\date{\small
${}^1$School of Mathematics, South China University of Technology, Guangzhou,
Guangdong, 510641, China\\
${}^2$School of Mathematics and Statistics,
Central China Normal University,
Wuhan, Hubei, 430079, China         }

\maketitle

\begin{abstract}
Linear complementary-dual  (LCD  for short) codes
 are   linear codes that intersect  with their duals  trivially.
 LCD codes have been used  in certain  communication systems.
 It is recently found that
LCD codes  can be applied  in cryptography. This application of LCD codes  renewed
the interest in the construction  of LCD codes having a  large minimum distance.
MDS  codes are optimal in the sense that
the  minimum distance  cannot be improved for given length and code size.
Constructing LCD MDS codes is thus of
significance in theory and practice.
Recently, Jin (\cite{Jin}, IEEE Trans. Inf. Theory,  2016)
constructed several classes of LCD MDS codes through generalized Reed-Solomon
codes. In this paper, a different approach is proposed to obtain  new LCD MDS codes
from generalized Reed-Solomon
codes.
Consequently, new code constructions  are provided  and  certain  previously known results in \cite{Jin} are extended.

\medskip
\textbf{Keywords:} Linear complementary dual, generalized Reed-Solomon code, MDS code.

\medskip
\textbf{2010 Mathematics Subject Classification:}~94B05,  11T71.
\end{abstract}

\section{Introduction}
A {\it linear complementary-dual}  (LCD  for short) code
is a linear code $\mathcal{C}$ that intersects  with its dual $\mathcal{C}^\perp$  trivially,
i.e.,  $\mathcal{C}\bigcap\mathcal{C}^\perp=\{\mathbf{0}\}$.
This class of codes was  introduced  by Massey   \cite{Massey}, where  he showed that
asymptotically good LCD codes exist.
In a follow-up paper \cite{Yang},
Yang and Massey   gave a necessary and sufficient condition for a cyclic code to be an LCD code.
Necessary and sufficient conditions for certain classes of quasi-cyclic codes to be LCD codes  were obtained in \cite{Esmaeili}.
Dinh \cite{Dinh} proved that any $\lambda$-constayclic codes with $\lambda\notin\{1,-1\}$ are LCD codes.
G\"{u}neri {\it et al.} \cite{Guneri} studied LCD quasi-cyclic codes,
including the algebraic characterizations, the asymptotic properties and the code constructions.
Dougherty {\it et al.} \cite{Dougherty} presented a linear programming bound on the largest size of an LCD code
of given length and minimum distance.
The parameters of several classes of LCD BCH codes  were  explicitly determined in \cite{LiC1,LiC2,LiS}.
Hou and Oggier \cite{Hou} established  a corresponding relationship between  lattices and LCD codes.

LCD codes have been used  in certain  communication systems.
Recently,  Carlet and Guilley \cite{Carlet}
found that  LCD codes  can be applied  in cryptography.
This application of LCD codes renewed
the interest in the construction of LCD codes having  a  large minimum distance.
Maximum distance separable (MDS)
codes are optimal in the sense that
no code of length $n$
with $M$ codewords has a larger minimum distance than an MDS code with  length  $n$
and size $M$.
Constructing LCD MDS codes is thus of
significance in theory and practice.
The class of generalized Reed-Solomon (GRS) codes is  probably the best known family of MDS codes.
Therefore, it is natural to construct LCD MDS codes through GRS codes.
Recently,
Jin \cite{Jin}
constructed several classes of LCD MDS codes by using two classes of disjoint GRS codes.
The existence question about LCD MDS codes over a finite field of even characteristic has been completely addressed in
\cite[Theorem IV.2]{Jin}.
Some other constructions of LCD MDS codes are known, e.g., see \cite{Me},  \cite{Sari} and \cite{Zhu}.

The purpose of this paper is to find new LCD MDS codes from GRS codes or extended GRS codes.
By virtue of \cite[Theorem IV.2]{Jin}, we only consider the construction of  LCD MDS codes over a finite field with odd characteristic.
A different approach from \cite{Jin} is proposed to obtain  new LCD MDS codes; specifically, some results of \cite{Jin} are extended.
Note that
the dual of an LCD code is an LCD code again. As a consequence, we always restrict ourself to $k$-dimensional  codes of length $n$ with
$1<k\leq \lfloor n/2\rfloor$, where $\lfloor a\rfloor$ denotes the integer part of a  real number $a$.
The main results of this paper are summarized as follows.
\begin{Theorem}\label{1.1}
Let $q>3$ be an odd prime power.
Let $n>1$ and $k$ be positive integers with $1<k\leq\lfloor n/2\rfloor$.
Then there
exists a $q$-ary $[n,k]$ LCD MDS code whenever one of the following conditions holds.
\begin{itemize}
\item[$(1)$](see Theorem \ref{generalized1})
 $n=q+1$.

\item[$(2)$](see Theorem \ref{divides-theorem})
 $n>1$ is a divisor of $q-1$.

\item[$(3)$](see Theorem \ref{power})
$q=p^e$ and $n=p^\ell$,  where   $p$ is  prime  and $1\leq \ell\leq e$.

\item[$(4)$](see Theorem \ref{3.6})
$n<q$ and $n+k\geq q+1$.

\item[$(5)$](see Theorem \ref{last-theorem})
$n<q$ and $2n-k<q\leq2n$.
\end{itemize}
\end{Theorem}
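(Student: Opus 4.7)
The theorem is a consolidated summary of five separate constructions, so my plan is to treat each part as an independent existence result unified by a single strategy: realize the desired $[n,k]$ code as an explicit (possibly extended) generalized Reed--Solomon code with a prescribed evaluation set $\boldsymbol{\alpha}=(\alpha_1,\ldots,\alpha_n)$ and multiplier vector $\boldsymbol{v}=(v_1,\ldots,v_n)$, and then verify that the resulting code is LCD. Since GRS and extended GRS codes are automatically MDS, the MDS property comes for free and the entire difficulty is the LCD condition. My first step, carried out once and reused throughout, is to establish a criterion asserting that $\mathrm{GRS}_k(\boldsymbol{\alpha},\boldsymbol{v})$ is LCD if and only if a certain symmetric condition involving the multipliers $v_i^{2}$ and the Lagrange derivatives $u_i:=\prod_{j\neq i}(\alpha_i-\alpha_j)$ is satisfied. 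This converts the problem into a system of quadratic constraints in $\mathbb{F}_q$ whose solvability is governed by the squareness behaviour of the quantities $u_i$.

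The second step is to treat each case with an evaluation set adapted to its arithmetic. For part (1), $n=q+1$, I would use an extended GRS code with $\boldsymbol{\alpha}$ running through $\mathbb{F}_q\cup\{\infty\}$ and exploit the factorization of $x^q-x$ to bring the $u_i$ into a particularly simple form (essentially a constant up to sign). For part (2), $n\mid q-1$, I would take $\boldsymbol{\alpha}$ to be the group of $n$-th roots of unity in $\mathbb{F}_q$; a direct computation gives $u_i$ proportional to $\alpha_i^{-1}$, so that the LCD criterion becomes a transparent question about powers of a primitive $n$-th root. For part (3), $n=p^\ell$, I would take $\boldsymbol{\alpha}$ to be an $\mathbb{F}_p$-subspace of $\mathbb{F}_q$ of dimension $\ell$; a classical property of linearised polynomials tells us that $u_i$ is the \emph{same} nonzero constant for every $i$, and the LCD criterion collapses to the trivial requirement that the $v_i^{2}$ can be chosen proportional to a fixed element.

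Parts (4) and (5) are the delicate ones, because no multiplicative-group or additive-subspace structure is available. Here my plan is to pick $\boldsymbol{\alpha}$ as a carefully selected subset of $\mathbb{F}_q$ of size $n$ and to build $\boldsymbol{v}$ through a counting argument based on the partition of $\mathbb{F}_q^{\ast}$ into squares and non-squares. The numerical inequalities $n+k\geq q+1$ for part (4) and $2n-k<q\leq 2n$ for part (5) should enter precisely as the cardinality conditions guaranteeing that enough multipliers of the correct quadratic character can simultaneously be assembled; the step in which the inequalities are used will therefore be a comparison between $n$, $k$ and $(q-1)/2$.

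The main obstacle I anticipate is exactly this last step in parts (4) and (5): once the LCD criterion has been reduced to constraints of the form $v_i^{2}\in S_i$ for prescribed sets $S_i\subset\mathbb{F}_q^{\ast}$, one must argue that such $v_i$ simultaneously exist. Parts (1)--(3) reduce, thanks to algebraic symmetry, to essentially trivial square-extraction problems, but in parts (4) and (5) the counting must be uniform in the evaluation set and sharp enough to cover the full range of $(n,k)$ permitted by the hypotheses. Everything else is bookkeeping, and the five items can naturally be split off as the separate theorems referenced in the statement.
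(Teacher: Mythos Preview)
Your overall architecture---treat the five items separately, realize each code as a GRS or extended GRS code, and reduce the LCD property to the interplay between $v_i^{2}$ and $u_i=\prod_{j\neq i}(\alpha_i-\alpha_j)^{-1}$---matches the paper. The paper's key lemma is exactly the criterion you describe: $\mathbf{c}\in\mathcal{C}\cap\mathcal{C}^{\perp}$ iff $(v_1^{2}f(\alpha_1),\ldots,v_n^{2}f(\alpha_n))=(u_1g(\alpha_1),\ldots,u_ng(\alpha_n))$ for some $g$ of degree $\leq n-k-1$. For parts (1)--(3) your choices of evaluation set (all of $\mathbb{F}_q$, the $n$-th roots of unity, an $\mathbb{F}_p$-subspace) coincide with the paper's, and the structural facts you cite (constant $u_i$ in the additive case, $u_i$ proportional to a power of $\alpha_i$ in the multiplicative case) are the ones actually used.

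The real divergence is in \emph{how} the criterion is exploited, and it matters most in parts (4) and (5). You propose to reduce everything to ``$v_i^{2}\in S_i$'' constraints and solve them by counting squares versus non-squares, with the inequalities $n+k\geq q+1$ and $2n-k<q\leq 2n$ entering as cardinality bounds. The paper does \emph{not} do this; no quadratic-character argument appears anywhere. Instead, the paper argues directly that $f\equiv 0$ by polynomial degree counting. The pivotal identity is $\prod_{\beta\in\mathbb{F}_q\setminus\{\alpha_i\}}(\alpha_i-\beta)=-1$, which lets one rewrite $u_i=-\prod_{j=n+1}^{q}(\alpha_i-\alpha_j)$ as a product over the \emph{complement} of the evaluation set. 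In part (4) one then takes $v_i=1$ on $q-k$ coordinates, forcing $f(X)=-\bigl(\prod_{j=n+1}^{q}(X-\alpha_j)\bigr)g(X)$ by degree; the hypothesis $n+k\geq q+1$ is exactly what bounds $\deg g$ low enough that the remaining $n+k-q$ coordinates (where $v_i$ is chosen to avoid a single bad value) kill $g$. In part (5) one sets $v_i=\prod_{j=1}^{n-k}(\alpha_i-\alpha_{n+j})$ outright, cancels this factor against part of the complement product, and uses $q\leq 2n$ and $2n-k<q$ purely as degree bounds to conclude a polynomial identity in which $\prod_{j=1}^{n-k}(X-\alpha_{n+j})$ divides $g$, forcing $g=0$. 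So the numerical hypotheses function as degree constraints, not as square-counting thresholds; your anticipated ``main obstacle'' dissolves once you switch from a squareness viewpoint to this complement-product and degree-comparison viewpoint. Even in parts (1)--(3) the paper never extracts square roots: it simply sets most $v_i=1$ and a few $v_i=\gamma$ with $\gamma^{2}\neq 1$, and runs the same degree argument.
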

At this point we make several remarks. The first part of Theorem \ref{1.1} improves
\cite[Theorem IV.4]{Jin} by removing the even $k$ constraint, and
the existence question about LCD MDS codes of length $n=q+1$ is completely addressed.
The second and  the third parts, respectively,    say that
$k$-dimensional LCD MDS codes  of length $q-1$ and $q$ over $\mathbb{F}_q$ exist.
The last two parts of Theorem \ref{1.1} imply  that $[n,k]$ LCD MDS codes    exist if $n$ and $k$ are close to $q$.
The conclusions of our results are different from those in \cite[Theorem IV.3]{Jin} and \cite[Theorem IV.6]{Jin}, where it requires that
the value of $q$ is much bigger than the value of $n$.

The remainder of this paper is organized as follows.
Basic
notations and results about  GRS
codes and extended GRS codes are provided in Section \ref{preliminaries}.
The main results included in Theorem \ref{1.1} are presented in Section \ref{sec-main}.

\section{Preliminaries}\label{preliminaries}
In this section,    we review some basic notations and results
about  generalized Reed-Solomon codes   and extended generalized Reed-Solomon codes.
For the details, the reader is referred to \cite{Ling} or \cite{Mac}.
Let $\mathbb{F}_q$ be the finite field of order $q$,
let $n$ be a positive integer with $1<n\leq q$,
and let
$\mathbf{a}=(\alpha_1, \alpha_2, \cdots, \alpha_n),$
where $\alpha_i$ ($1\leq i\leq n$) are  distinct elements of $\mathbb{F}_q$.
Fix $n$ nonzero elements $v_1, v_2, \cdots, v_n$ of $\mathbb{F}_q$ ($v_i$ are not necessarily distinct).
For $1\leq k\leq n$, the $k$-dimensional {\it generalized Reed-Solomon code} (GRS code for short) of length $n$ associated with
$\mathbf{a}=(\alpha_1, \alpha_2, \cdots, \alpha_n)$
and $\mathbf{v}=(v_1,v_2,\cdots,v_n)$
is defined to be
\begin{equation}\label{GRS}
GRS_{k}\big(\mathbf{a},\mathbf{v}\big)=
\Big\{\big(v_1f(\alpha_1), v_2f(\alpha_2), \cdots, v_nf(\alpha_n)\big)\,\Big{|}\,f(X)\in \mathbb{F}_q[X], \deg f(X)\leq k-1\Big\}.
\end{equation}
Clearly, $GRS_{k}(\mathbf{a},\mathbf{v})$ has a generator matrix
\begin{equation}\label{matrix}
G=\left(
    \begin{array}{ccccccc}
      v_1 & v_2 & \cdots & v_n &     \\
      v_1\alpha_1 & v_2\alpha_2 & \cdots & v_n\alpha_{n}     \\
      v_1\alpha_1^2 & v_2\alpha_2^2 & \cdots & v_n\alpha_{n}^2     \\
      \vdots & \vdots & \ddots & \vdots&     \\
      v_1\alpha_1^{k-1} & v_2\alpha_2^{k-1} & \cdots & v_n\alpha_{n}^{k-1}     \\
    \end{array}
  \right).
\end{equation}
It is well known that the code $GRS_{k}(\mathbf{a},\mathbf{v})$ is a $q$-ary $[n,k,n-k+1]$-MDS code and the dual of  a GRS code is again a GRS code;
indeed,
$$
GRS_{k}(\mathbf{a},\mathbf{v})^\perp=GRS_{n-k}(\mathbf{a},\mathbf{v'})
$$
for some $\mathbf{v'}=(v'_1, v'_2, \cdots, v'_n)$
with $v'_i\neq0$ for all $1\leq i\leq n$ (e.g., see \cite{Ling} or \cite{Mac}).
The vector $\mathbf{v'}$  can be chosen as any vector that generates the
dual of $GRS_{n-1}(\mathbf{a},\mathbf{v})$.

Let $\mathbf{1}$ denote the all-one row vector with appropriate length.
The dual of $GRS_{k}(\mathbf{a},\mathbf{1})$ is $GRS_{n-k}(\mathbf{a},\mathbf{u})$, where
$\mathbf{u}=(u_1,u_2,\cdots, u_n)$ with $u_i=\prod_{1\leq j\leq n,j\neq i}(\alpha_i-\alpha_j)^{-1}$ for $1\leq i\leq n$
(e.g., see \cite[Lemma 2.3]{JinX}). More precisely, one has
\begin{equation}\label{important}
GRS_{k}(\mathbf{a},\mathbf{1})^\perp=GRS_{n-k}(\mathbf{a},\mathbf{u})=
\Big\{\big(u_1g(\alpha_1),   \cdots, u_ng(\alpha_n)\big)\,\Big{|}\,g(X)\in \mathbb{F}_q[X], \deg g(X)\leq n-k-1\Big\}.
\end{equation}

GRS codes  are probably the best known family of MDS codes.
Obviously, GRS codes  exist for any length $n\leq q$ and any dimension $k\leq n$.
GRS codes    of length $n$ can be extended to codes of length $n+1$  while preserving the MDS
property by appending to $G$   an extra column of the form   $(0,0,\cdots, 0, \beta)^T$ with $\beta$ being a nonzero element of $\mathbb{F}_q$.
In this paper, we consider extended GRS codes of length $q+1$.
Label  the elements of  $\mathbb{F}_q=\{\alpha_1,\alpha_2, \cdots, \alpha_q\}$.
Given a vector $\mathbf{v}=(v_1,v_2,\cdots, v_q)$ with $v_j\neq0$ for all $1\leq j\leq q$, the $k$-dimensional
{\it extended GRS code} of length $q+1$ associated with $\mathbf{a}=(a_1,a_2,\cdots,a_q)$
and $\mathbf{v}$ is defined as
\begin{equation}\label{extended-grs}
GRS_{k}\big(\mathbf{a},\mathbf{v},\infty\big)=\Big\{\big(v_1f(\alpha_1),\cdots, v_qf(\alpha_q), f_{k-1}\big)\,\Big{|}\,f(X)\in \mathbb{F}_q[X], \deg f(X)\leq k-1\Big\},
\end{equation}
where $f_{k-1}$ is the coefficient of $X^{k-1}$ of the polynomial $f(X)$.
It is true that $GRS_{k}(\mathbf{a},\mathbf{v},\infty)$ is a $q$-ary  MDS code with parameters $[q+1,k,q-k+2]$
(e.g., see \cite{Ling} or \cite{Mac}).
The code $GRS_{k}(\mathbf{a},\mathbf{v},\infty)$ has a generator matrix
\begin{equation*}
G_{\infty}=\left(
    \begin{array}{ccccc}
      v_1 & v_2 & \cdots   & v_q   & 0 \\
      v_1\alpha_1 & v_2\alpha_2 & \cdots & v_q\alpha_{q}  & 0 \\
      v_1\alpha_1^2 & v_2\alpha_2^2 & \cdots & v_q\alpha_{q}^2  & 0 \\
      \vdots & \vdots & \ddots & \vdots & \vdots \\
      v_1\alpha_1^{k-1} & v_2\alpha_2^{k-1} & \cdots & v_q\alpha_{q}^{k-1} & 1 \\
    \end{array}
  \right).
\end{equation*}
The dual of $GRS_{k}\big(\mathbf{a},\mathbf{1},\infty\big)$
can be determined explicitly; in fact,
it is not hard to verify  that the dual of $GRS_{k}\big(\mathbf{a},\mathbf{1},\infty\big)$ is
$$
GRS_{k}\big(\mathbf{a},\mathbf{1},\infty\big)^\perp=
\Big\{\big(g(\alpha_1), g(\alpha_2), \cdots, g(\alpha_q), g_{q-k}\big)\,\Big{|}\,g(X)\in \mathbb{F}_q[X], \deg g(X)\leq q-k\Big\},
$$
where $g_{q-k}$ is the coefficient of $X^{q-k}$ of the polynomial $g(X)$ (e.g., see \cite[Lemma 7.7]{Ball} or \cite[Lemma 2.3]{JinX}).

\section{Constructions of  LCD MDS codes}\label{sec-main}
The purpose of this section is to find LCD codes among the family of GRS codes or among the family of extended GRS codes.
Consequently  the resulting codes are simultaneously LCD and MDS.
As mentioned in the first section, the existence question about LCD MDS codes over a finite field of even characteristic has been completely addressed in
\cite{Jin}.
We therefore only consider GRS codes over a finite field with odd characteristic.

Note that
the dual of an LCD code is an LCD code again. As a consequence, we always restrict ourself to $k$-dimensional  codes of length $n$ with
$1<k\leq \lfloor n/2\rfloor$, where $\lfloor a\rfloor$ denotes the integer part of a  real number $a$.

We begin with the following lemma which is  useful for the construction of LCD GRS codes.
\begin{lem}\label{lem}
Suppose  $GRS_k(\mathbf{a},\mathbf{v})$ is the   GRS code  associated with $\mathbf{a}$
and $\mathbf{v}$,  as in (\ref{GRS}).
A typical codeword $\mathbf{c}=\big(v_1f(\alpha_1), v_2f(\alpha_2), \cdots, v_nf(\alpha_n)\big)$ of $GRS_k(\mathbf{a},\mathbf{v})$
is contained in
$GRS_k(\mathbf{a},\mathbf{v})^\perp$ if and only if
a polynomial $g(X)\in\mathbb{F}_q[X]$ with  $\deg g(X)\leq n-k-1$ can be found such that
\begin{equation}\label{lemma-equation}
\big(v_1^2f(\alpha_1), v_2^2f(\alpha_2), \cdots, v_n^2f(\alpha_n)\big)=
\big(u_1g(\alpha_1), u_2g(\alpha_2), \cdots, u_ng(\alpha_n)\big),
\end{equation}
where $u_i=\prod_{1\leq j\leq n,j\neq i}(\alpha_i-\alpha_j)^{-1}$ for $1\leq i\leq n$.
\end{lem}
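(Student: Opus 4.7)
The plan is to recast the dual membership condition in terms of the explicit dual formula (\ref{important}) for the standard GRS code $GRS_k(\mathbf{a},\mathbf{1})$. The key algebraic observation is that the inner product of a $\mathbf{v}$-weighted codeword with another $\mathbf{v}$-weighted codeword can be rewritten so that both factors of $\mathbf{v}$ are absorbed into $f(\alpha_i)$, and the orthogonality is then tested against the unweighted code.

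Concretely, I would begin by unpacking what $\mathbf{c}\in GRS_k(\mathbf{a},\mathbf{v})^\perp$ means: by (\ref{GRS}), a generic codeword of $GRS_k(\mathbf{a},\mathbf{v})$ has the form $(v_1h(\alpha_1),\dots,v_nh(\alpha_n))$ for some $h(X)\in\mathbb{F}_q[X]$ with $\deg h\leq k-1$, so $\mathbf{c}\in GRS_k(\mathbf{a},\mathbf{v})^\perp$ amounts to the requirement that
$$\sum_{i=1}^n v_if(\alpha_i)\cdot v_ih(\alpha_i)=\sum_{i=1}^n \big(v_i^2f(\alpha_i)\big)h(\alpha_i)=0$$
for every such $h(X)$. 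This is precisely the statement that the vector $\big(v_1^2f(\alpha_1),\dots,v_n^2f(\alpha_n)\big)$ lies in $GRS_k(\mathbf{a},\mathbf{1})^\perp$.

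Next I would apply (\ref{important}) to identify $GRS_k(\mathbf{a},\mathbf{1})^\perp$ with $GRS_{n-k}(\mathbf{a},\mathbf{u})$; by the definition of the latter, membership is equivalent to the existence of a polynomial $g(X)\in\mathbb{F}_q[X]$ with $\deg g\leq n-k-1$ such that $v_i^2f(\alpha_i)=u_ig(\alpha_i)$ for every $1\leq i\leq n$, which is exactly (\ref{lemma-equation}). Each step above is a logical equivalence, so both directions of the ``if and only if'' are obtained at once.

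I do not anticipate any real obstacle: the argument is a short unpacking of the definition of the dual, combined with the explicit description of $GRS_k(\mathbf{a},\mathbf{1})^\perp$ recalled in Section \ref{preliminaries}. The only point worth emphasizing is the bookkeeping step of absorbing both factors of $v_i$ into $f(\alpha_i)$, since it transforms an orthogonality relation against the $\mathbf{v}$-weighted code into one against the unweighted code $GRS_k(\mathbf{a},\mathbf{1})$ — and it is exactly this reduction that makes the explicit formula (\ref{important}) directly applicable.
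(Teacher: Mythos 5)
Your argument is correct and is essentially the paper's own proof: the authors factor the generator matrix as $G=G_1\Delta$ with $\Delta=\mathrm{diag}(v_1,\dots,v_n)$ and observe that $G\mathbf{c}^T=G_1\big(v_1^2f(\alpha_1),\dots,v_n^2f(\alpha_n)\big)^T$, which is exactly your step of absorbing both factors of $v_i$ into the codeword and testing orthogonality against $GRS_k(\mathbf{a},\mathbf{1})$. Both proofs then conclude by the same appeal to (\ref{important}).
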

\begin{proof}
Note that $GRS_k(\mathbf{a},\mathbf{v})$ has a generator matrix $G$  given by (\ref{matrix}).
Clearly, we have $G=G_1\Delta$, where
\begin{equation*}
G_1=\left(
    \begin{array}{ccccccc}
      1 & 1 & \cdots & 1 &     \\
      \alpha_1 &  \alpha_2 & \cdots &  \alpha_{n}     \\
      \alpha_1^2 &  \alpha_2^2 & \cdots &  \alpha_{n}^2     \\
      \vdots & \vdots & \ddots & \vdots&     \\
       \alpha_1^{k-1} &  \alpha_2^{k-1} & \cdots &  \alpha_{n}^{k-1}     \\
    \end{array}
  \right)
\end{equation*}
and $\Delta$ is the diagonal matrix
$$
\Delta=
\left(
           \begin{array}{cccc}
             v_1 & \hfill & \hfill & \hfill \\
            \hfill & v_2 & \hfill & \hfill \\
             \hfill & \hfill & \ddots & \hfill \\
             \hfill & \hfill & \hfill & v_n \\
           \end{array}
         \right).
$$
It follows that
$\mathbf{c}=\big(v_1f(\alpha_1), v_2f(\alpha_2), \cdots, v_nf(\alpha_n)\big)$ is contained in $GRS_k(\mathbf{a},\mathbf{v})^\perp$
if and only if
$$
G\mathbf{c}^T=\big(G_1\Delta\big){\mathbf{\mathbf{c}}}^T=G_1\big(\Delta{\mathbf{c}}^T\big)=G_1\big(v_1^2f(\alpha_1), v_2^2f(\alpha_2), \cdots, v_n^2f(\alpha_n)\big)^T=\mathbf{0},
$$
where $\mathbf{c}^T$ denotes the transpose of $\mathbf{c}$. Recall that
the dual of $GRS_{k}(\mathbf{a},\mathbf{1})$ is $GRS_{n-k}(\mathbf{a},\mathbf{u})$, where
$\mathbf{u}=(u_1,u_2,\cdots, u_n)$ with $u_i=\prod_{1\leq j\leq n,j\neq i}(\alpha_i-\alpha_j)^{-1}$ for $1\leq i\leq n$.
Now the desired result follows immediately from (\ref{important}).
\end{proof}
The arguments used in   Lemma \ref{lem} can be modified to obtain an analogous result for extended GRS codes.
\begin{lem}\label{lem-extended}
Suppose  $GRS_k(\mathbf{a},\mathbf{v},\infty)$ is the  extended  GRS code
associated with $\mathbf{a}=(\alpha_1,\alpha_2,\cdots,\alpha_q)$
and $\mathbf{v}=(v_1,v_2,\cdots,v_q)$,  as in (\ref{extended-grs}).
A typical codeword $\mathbf{c}=\big(v_1f(\alpha_1), \cdots, v_qf(\alpha_q),f_{k-1}\big)$ of $GRS_k(\mathbf{a},\mathbf{v},\infty)$
is contained in
$GRS_k(\mathbf{a},\mathbf{v},\infty)^\perp$ if and only if
a polynomial $g(X)\in\mathbb{F}_q[X]$ with  $\deg g(X)\leq q-k$ can be found such that
\begin{equation*}
\big(v_1^2f(\alpha_1),  \cdots, v_q^2f(\alpha_q),f_{k-1}\big)
=\big(g(\alpha_1),  \cdots, g(\alpha_q), g_{q-k}\big),
\end{equation*}
where $g_{q-k}$ is the coefficient of $X^{q-k}$ of the polynomial $g(X)$.
\end{lem}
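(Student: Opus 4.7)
The plan is to mimic the proof of Lemma \ref{lem}, adapting the matrix factorization so that it handles the ``infinity'' coordinate correctly. The key observation is that the generator matrix $G_\infty$ of $GRS_k(\mathbf{a},\mathbf{v},\infty)$ factors as $G_\infty = G_1'\Delta'$, where $\Delta'=\mathrm{diag}(v_1,v_2,\ldots,v_q,1)$ (note the final $1$, not a $v_i$) and
\[
G_1'=\left(\begin{array}{ccccc} 1 & 1 & \cdots & 1 & 0 \\ \alpha_1 & \alpha_2 & \cdots & \alpha_q & 0 \\ \vdots & \vdots & \ddots & \vdots & \vdots \\ \alpha_1^{k-1} & \alpha_2^{k-1} & \cdots & \alpha_q^{k-1} & 1 \end{array}\right)
\]
is a generator matrix of $GRS_k(\mathbf{a},\mathbf{1},\infty)$. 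The shape of the last column (zeros except for a $1$ in the bottom entry) is exactly what forces the extra coordinate in $\Delta'$ to be $1$; this is the only place where the argument differs from Lemma \ref{lem}.

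Given a codeword $\mathbf{c}=\bigl(v_1f(\alpha_1),\ldots,v_qf(\alpha_q),f_{k-1}\bigr)$, I would then compute
\[
\Delta'\mathbf{c}^T=\bigl(v_1^2f(\alpha_1),\ldots,v_q^2f(\alpha_q),f_{k-1}\bigr)^T,
\]
and observe that $\mathbf{c}\in GRS_k(\mathbf{a},\mathbf{v},\infty)^\perp$ is equivalent to $G_\infty\mathbf{c}^T=\mathbf{0}$, i.e.\ to $G_1'(\Delta'\mathbf{c}^T)=\mathbf{0}$. Thus membership in the dual is equivalent to the vector $\Delta'\mathbf{c}^T$ being orthogonal to every row of $G_1'$, which means exactly that the vector on the left-hand side of the claimed identity lies in $GRS_k(\mathbf{a},\mathbf{1},\infty)^\perp$.

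Finally, I would invoke the explicit description of $GRS_k(\mathbf{a},\mathbf{1},\infty)^\perp$ recalled at the end of Section \ref{preliminaries}, namely
\[
GRS_k(\mathbf{a},\mathbf{1},\infty)^\perp=\bigl\{\bigl(g(\alpha_1),\ldots,g(\alpha_q),g_{q-k}\bigr)\,\big|\,g(X)\in\mathbb{F}_q[X],\ \deg g(X)\leq q-k\bigr\}.
\]
Combining this with the previous step yields exactly the equation displayed in the lemma and completes the proof. I do not anticipate a serious obstacle; the main point to get right is the precise form of $\Delta'$ (ending in a $1$ rather than a $v_q^2$-like entry) and the matching appearance of $f_{k-1}$ and $g_{q-k}$ in the infinity coordinate, both of which are forced by the special last column of $G_\infty$ and of the parity-check description of $GRS_k(\mathbf{a},\mathbf{1},\infty)^\perp$.
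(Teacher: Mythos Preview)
Your proposal is correct and follows precisely the approach the paper intends: it does not spell out a separate proof of this lemma but simply remarks that the arguments of Lemma~\ref{lem} can be modified, and your factorization $G_\infty=G_1'\Delta'$ with $\Delta'=\mathrm{diag}(v_1,\ldots,v_q,1)$ together with the explicit description of $GRS_k(\mathbf{a},\mathbf{1},\infty)^\perp$ from Section~\ref{preliminaries} is exactly that modification.
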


We first construct LCD MDS codes of length $q+1$ by using Lemma \ref{lem-extended}.
\cite[Theorem IV.4]{Jin}
guarantees that
there exists a $k$-dimensional LCD
MDS code of length $q+1$ when $k$ is even.
The next result improves \cite[Theorem IV.4]{Jin} by removing the even $k$ constraint.
Recall that we always restrict ourself to $k$-dimensional  codes of length $n$ with
$1<k\leq \lfloor n/2\rfloor$.
\begin{Theorem}\label{generalized1}
Let $q>3$ be an odd  prime power. Then there exists a $k$-dimensional LCD
extended  GRS  code of length $q+1$ over $\mathbb{F}_q$.
\end{Theorem}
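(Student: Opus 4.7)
The plan is to apply Lemma \ref{lem-extended} to an explicit choice of evaluation vector. Label the elements of $\mathbb{F}_q$ as $\alpha_1,\dots,\alpha_q$ with $\alpha_q=0$. Because $q>3$, the set of nonzero squares in $\mathbb{F}_q$ has size $(q-1)/2\ge 2$, so we can fix a nonzero square $\beta\in\mathbb{F}_q$ with $\beta\neq 1$. Let $c$ be the unique even integer in $\{q-k,\,q-k+1\}$; exactly one of these is even because $q$ is odd. Consider
\[
V(X)\;=\;X^{c}+\beta\bigl(1-X^{q-1}\bigr)\;\in\;\mathbb{F}_q[X].
\]
Since $X^{q-1}$ equals $1$ on $\mathbb{F}_q^{*}$ and $0$ at $X=0$, we have $V(\alpha)=\alpha^{c}$ for $\alpha\neq 0$ and $V(0)=\beta$. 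Both values are nonzero squares in $\mathbb{F}_q$ (the former because $c$ is even, the latter by the choice of $\beta$), so we can pick $v_i\in\mathbb{F}_q^{*}$ with $v_i^{2}=V(\alpha_i)$ for every $i$.

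By Lemma \ref{lem-extended}, the intersection $C\cap C^{\perp}$ with $C=GRS_k(\mathbf{a},\mathbf{v},\infty)$ is trivial precisely when the only $f\in\mathbb{F}_q[X]$ of degree $\le k-1$ admitting a companion $g\in\mathbb{F}_q[X]$ of degree $\le q-k$ with $V(\alpha)f(\alpha)=g(\alpha)$ for all $\alpha\in\mathbb{F}_q$ and $f_{k-1}=g_{q-k}$ is $f=0$. The pointwise identity is equivalent to $Vf\equiv g\pmod{X^{q}-X}$, so the main step is to reduce $Vf$ modulo $X^{q}-X$ via $X^{q}\equiv X$ (equivalently $X^{q-1+j}\equiv X^{j}$ for $j\ge 1$). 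A short bookkeeping in which $\beta f$ and $-\beta X^{q-1}f$ largely cancel gives
\[
V(X)\,f(X)\;\equiv\;X^{c}f(X)\;+\;\beta f_{0}\;-\;\beta f_{0}\,X^{q-1}\pmod{X^{q}-X}.
\]

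Substituting the two admissible values of $c$, the term $X^{c}f$ modulo $X^{q}-X$ is a mere right-shift of the coefficients of $f$: no reduction at all when $c=q-k$ (odd $k$), and a single top-degree reduction $f_{k-1}X^{q}\mapsto f_{k-1}X$ when $c=q-k+1$ (even $k$). Reading off the coefficients of the resulting $g$ at degrees $q-k+1,\dots,q-1$ and setting each to zero forces $f_{1}=\cdots=f_{k-2}=0$ (in some order) and yields a single linear relation of the form $f_{k-1}=\beta f_{0}$ (odd $k$) or $f_{k-2}=\beta f_{0}$ (even $k$). The remaining constraint $f_{k-1}=g_{q-k}$, combined with $\beta\neq 1$ and the inequality $q-k\ge(q-1)/2\ge 2$ (valid because $q\ge 5$ and $k\le\lfloor(q+1)/2\rfloor$), then closes the argument to $f\equiv 0$. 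The main obstacle is the careful bookkeeping across the two parity cases of $k$; no deeper idea is required once the polynomial $V$ above is in hand.
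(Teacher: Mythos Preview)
Your argument is correct and follows a genuinely different route from the paper. The paper picks $\mathbf{v}$ to be $1$ on a large block of coordinates and a primitive root $\gamma$ on the remaining few; equality $f(\alpha_i)=g(\alpha_i)$ on the large block forces $f=g$ by a degree count, after which the $\gamma^{2}$--block together with the $\infty$--constraint kills $f$. The case split there is $k<(q+1)/2$ versus $k=(q+1)/2$. Your construction instead arranges $v_i^{2}=\alpha_i^{c}$ at all nonzero evaluation points (and $v_q^{2}=\beta$ at $0$), so that the system $v_i^{2}f(\alpha_i)=g(\alpha_i)$ becomes a single polynomial congruence $Vf\equiv g\pmod{X^{q}-X}$ that can be solved by reading off coefficients; your case split is by the parity of $k$. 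This is a neat algebraic alternative that in particular avoids the separate treatment of $k=(q+1)/2$.

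One small inaccuracy in your write-up: in the even-$k$ case the vanishing conditions at degrees $q-k+1,\dots,q-2$ give $f_{0}=\cdots=f_{k-3}=0$ (not $f_{1}=\cdots=f_{k-2}=0$), and the degree-$(q-1)$ condition is $f_{k-2}=\beta f_{0}$. Together these still yield $f_{0}=\cdots=f_{k-2}=0$; then, because $q-k\ge 2$, the coefficient $g_{q-k}$ of the reduced polynomial is $0$, and $f_{k-1}=g_{q-k}$ finishes. So the conclusion is unaffected, but the bookkeeping for even $k$ should be stated accordingly.
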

\begin{proof}
Label  the elements of  $\mathbb{F}_q=\{\alpha_1,\alpha_2, \cdots, \alpha_q\}$.
Let $\mathbf{a}=(\alpha_1,\alpha_2,\cdots,\alpha_q)$ and
let $\gamma$ be a primitive $(q-1)$th root of unity in $\mathbb{F}_q$, i.e., $q-1$ is the smallest positive integer
such that $\gamma^{q-1}=1$. Since $q>3$, we have $\gamma^2\neq1$.

We now consider two cases separately.

Case 1: $2\leq k<(q+1)/2$.
Let
$$\mathbf{v}=\big(v_1,v_2,\cdots,  v_{q-k+1},v_{q-k+2},\cdots,v_q\big),$$
where  $v_i=1$  for $1\leq i\leq q-k+1$ and
$v_i=\gamma$ for $q-k+2\leq i\leq q$.
Consider the extended GRS code $GRS_{k}(\mathbf{a},\mathbf{v},\infty)$ of length $q+1$ over $\mathbb{F}_q$ associated with $\mathbf{a}$
and $\mathbf{v}$, i.e.,
\begin{equation*}
\begin{split}
GRS_{k}\big(\mathbf{a},\mathbf{v},\infty\big)=\Big\{&\big(f(\alpha_1), f(\alpha_2), \cdots, f(\alpha_{q-k+1}), \\
& \gamma f(\alpha_{q-k+2}), \cdots, \gamma f(\alpha_q), f_{k-1}\big)\,\Big{|}\,f(X)\in \mathbb{F}_q[X], \deg f(X)\leq k-1\Big\},
\end{split}
\end{equation*}
where $f_{k-1}$ is the coefficient of $X^{k-1}$ of the polynomial $f(X)$.
We now aim to show that $GRS_{k}(\mathbf{a},\mathbf{v},\infty)$ is an LCD code, namely
$GRS_{k}(\mathbf{a},\mathbf{v},\infty)\bigcap GRS_{k}(\mathbf{a},\mathbf{v},\infty)^\perp=\{\mathbf{0}\}$.
For this purpose, let
$$\mathbf{c}=\big(f(\alpha_1), f(\alpha_2), \cdots, f(\alpha_{q-k+1}),  \gamma f(\alpha_{q-k+2}), \cdots, \gamma f(\alpha_q), f_{k-1}\big)$$
be an arbitrary element of $GRS_{k}(\mathbf{a},\mathbf{v},\infty)\bigcap GRS_{k}(\mathbf{a},\mathbf{v},\infty)^\perp$, where
$f(X)\in \mathbb{F}_q[X]$ with $\deg f(X)\leq k-1$.
Applying  Lemma \ref{lem-extended},
we see  that a polynomial $g(X)\in\mathbb{F}_q[X]$ with  $\deg g(X)\leq q-k$  can be found such that
\begin{equation}\label{extended}
\begin{split}
&\big(f(\alpha_1), \cdots, f(\alpha_{q-k+1}), \gamma^2 f(\alpha_{q-k+2}), \cdots, \gamma^2 f(\alpha_q), f_{k-1}\big)\\
&=\big(g(\alpha_1),  \cdots, g(\alpha_{q-k+1}),  g(\alpha_{q-k+2}), \cdots,g(\alpha_q), g_{q-k}\big),
\end{split}
\end{equation}
where $g_{q-k}$ is the coefficient of $X^{q-k}$ of the polynomial $g(X)$.
Our task is to show $\mathbf{c}=\mathbf{0}$, or equivalently  $f(X)=0$.
By $\deg f(X)\leq k-1\leq q-k$ and $\deg g(X)\leq q-k$,
(\ref{extended}) suggests
$f(X)=g(X)$, since $f(\alpha_i)=g(\alpha_i)$ for $1\leq i\leq q-k+1$.
However, the last coordinate of Equation (\ref{extended}) requires
$f_{k-1}=g_{q-k}$. This leads to $f_{k-1}=0$; otherwise we would have $k=(q+1)/2$, a contradiction.
Thus, $\deg f(X)\leq k-2$.
At the moment, Equation (\ref{extended}) also gives
$$
\gamma^2 f(\alpha_j)=g(\alpha_j)=f(\alpha_j)~~\hbox{for $q-k+2\leq j\leq q$},
$$
where the last equality holds because $f(X)=g(X)$.
It follows from $\gamma^2\neq1$ that $f(\alpha_j)=0$ for $q-k+2\leq j\leq q$.
This forces $f(X)=0$, because $f(X)\in \mathbb{F}_q[X]$ is a polynomial with $\deg f(X)\leq k-2$  and has $k-1$ distinct roots
$\alpha_{q-k+2}, \cdots, \alpha_q$.  We thus conclude   $\mathbf{c}=\mathbf{0}$, as desired.

Case 2: $k=(q+1)/2$. In this case,
let
$$\mathbf{v'}=\big(v_1,v_2,\cdots,  v_{k-1},v_{k},\cdots,v_q\big),$$
where  $v_i=1$  for $1\leq i\leq k-1$ and
$v_i=\gamma$ for $k\leq i\leq q$.
Consider the extended GRS code $GRS_{k}(\mathbf{a},\mathbf{v'},\infty)$ of length $q+1$ over $\mathbb{F}_q$ associated with $\mathbf{a}$
and $\mathbf{v'}$, i.e.,
\begin{equation*}
\begin{split}
GRS_{k}\big(\mathbf{a},\mathbf{v'},\infty\big)=\Big\{&\big(f(\alpha_1), f(\alpha_2), \cdots, f(\alpha_{k-1}), \\
& \gamma f(\alpha_{k}), \cdots, \gamma f(\alpha_q), f_{k-1}\big)\,\Big{|}\,f(X)\in \mathbb{F}_q[X], \deg f(X)\leq k-1\Big\},
\end{split}
\end{equation*}
where $f_{k-1}$ is the coefficient of $X^{k-1}$ of the polynomial $f(X)$.
Let
$$
\big(f(\alpha_1), f(\alpha_2), \cdots, f(\alpha_{k-1}),  \gamma f(\alpha_{k}), \cdots, \gamma f(\alpha_q), f_{k-1}\big)
$$
be an arbitrary element of $GRS_{k}\big(\mathbf{a},\mathbf{v'},\infty\big)$ with $\deg f(X)\leq k-1=(q-1)/2$.
Therefore, we have a polynomial $g(X)\in\mathbb{F}_q[X]$ with  $\deg g(X)\leq q-k=(q-1)/2$    such that
\begin{equation}\label{extendedd}
\big(f(\alpha_1),   \cdots, f(\alpha_{k-1}), \gamma^2 f(\alpha_{k}), \cdots, \gamma^2 f(\alpha_q), f_{k-1}\big)
=\big(g(\alpha_1),   \cdots, g(\alpha_{k-1}), g(\alpha_{k}), \cdots, g(\alpha_q), g_{q-k}\big),
\end{equation}
where $g_{q-k}$ is the coefficient of $X^{q-k}$ of the polynomial $g(X)$.
By Equation (\ref{extendedd}), one has $f_{k-1}=g_{q-k}$ which implies that $\deg\big(f(X)-g(X)\big)\leq k-2$.
Similar arguments as in Case 1 give  $f(X)=0$. We are done.
\end{proof}

Next we turn to construct LCD MDS codes from GRS codes of length $1<n\leq q$.
We obtain the following theorems.
\begin{Theorem}\label{divides-theorem}
Let $q>3$ be an odd  prime power.
If $n>1$ is a divisor of $q-1$, then
there exists a $k$-dimensional
LCD GRS code of length $n$ over $\mathbb{F}_q$.
\end{Theorem}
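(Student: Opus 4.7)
The plan is to take $\alpha_i=\beta^{i-1}$ for $1\le i\le n$, where $\beta$ is a primitive $n$-th root of unity in $\mathbb{F}_q$; this exists because $n\mid q-1$. Then $\prod_{i=1}^{n}(X-\alpha_i)=X^n-1$, and differentiation at $\alpha_i$ yields $u_i=\alpha_i/n$. Applying Lemma~\ref{lem} and letting $W(X)$ be the polynomial of degree $\le n-1$ with $W(\alpha_i)=v_i^2/u_i$, the LCD property becomes the following statement in $R:=\mathbb{F}_q[X]/(X^n-1)$: multiplication by $W$ must send $V_k:=\mathrm{span}\{1,X,\dots,X^{k-1}\}$ to a subspace meeting $V_{n-k}:=\mathrm{span}\{1,X,\dots,X^{n-k-1}\}$ only at $0$.

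The cleanest realization is to take the monomial $W(X)=cX^{n-k}$: multiplication by $X^{n-k}$ in $R$ cyclically shifts $V_k$ onto $\mathrm{span}\{X^{n-k},\dots,X^{n-1}\}$, which is disjoint from $V_{n-k}$ since $1<k\le n/2$. Realizing such a $W$ requires $v_i^2=(c/n)\alpha_i^{n-k+1}$ to be a nonzero square in $\mathbb{F}_q$ for every $i$. This is automatic when $n-k+1$ is even, and also when $\beta$ is a square in $\mathbb{F}_q$, since then all $\alpha_i$ are squares. In particular $\beta$ is automatically a square when $n$ is odd, because $n\mid q-1$ and $q$ odd force $(q-1)/n$ to be even.

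The main obstacle is the remaining regime in which $n$ and $k$ are both even and $\beta$ is not a square. Here no monomial $W$ can be realized, so I would instead use a two-term interpolant. Pick a non-square $\mu\in\mathbb{F}_q^*$ with $\mu\ne -1$; at least two non-squares exist because $q>3$, so such $\mu$ is available. Set $v_i^2=\alpha_i^{n-k+1}$ for odd $i$ and $v_i^2=\mu\alpha_i^{n-k+1}$ for even $i$. Because $\alpha_i$ is a square exactly when $i-1$ is even, a short parity check using that $n-k+1$ is odd confirms that $v_i^2$ is a nonzero square in both cases, so $v_i\in\mathbb{F}_q^*$ exists.

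To verify the LCD property for this $\mathbf{v}$, I would exploit $\alpha_i^{n/2}=(-1)^{i-1}$ to rewrite the interpolant as
\[
W(X)\equiv \tfrac{n(1+\mu)}{2}\,X^{n-k}+\tfrac{n(1-\mu)}{2}\,X^{n/2-k}\pmod{X^n-1}.
\]
Since $k\le n/2$, the supports of $X^{n-k}f$ and $X^{n/2-k}f$ lie in the disjoint ranges $[n-k,n-1]$ and $[n/2-k,n/2-1]$. Hence $Wf\in V_{n-k}$ forces all coefficients of $Wf$ at positions $[n-k,n-1]$ to vanish, giving $\tfrac{n(1+\mu)}{2}f_j=0$ for every $j$; since $\mu\ne -1$ and $q$ is odd, this forces $f=0$. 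The boundary subcase $k=n/2$, where $X^{n/2-k}$ degenerates to the constant $1$, is handled by the same disjoint-support argument, completing the verification via Lemma~\ref{lem}.
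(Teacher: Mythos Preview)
Your argument is correct, but it follows a genuinely different path from the paper's.

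The paper also evaluates at the $n$th roots of unity $\omega^{i-1}$ and uses (implicitly) that $u_i=\omega^{i-1}/n$. However, instead of engineering the whole interpolant $W$, it simply sets $v_i=1$ for $1\le i\le n-k+1$ and picks any $v_i\notin\{-1,0,1\}$ for $n-k+2\le i\le n$. The first $n-k+1$ coordinates of \eqref{lemma-equation} then read $f(\omega^i)=\omega^i g(\omega^i)$, and since both $f(X)$ and $Xg(X)$ have degree at most $n-k$ this forces $f(X)=Xg(X)$, hence $\deg g\le k-2$. The remaining $k-1$ coordinates give $(v_j^2-1)\,\omega^{j-1}g(\omega^{j-1})=0$, and $v_j^2\ne1$ yields $k-1$ distinct roots of $g$, so $g=0$. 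No parity or quadratic-residue casework is needed.

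Your route --- reformulating the LCD condition as $(W\cdot V_k)\cap V_{n-k}=\{0\}$ in $R=\mathbb{F}_q[X]/(X^n-1)$ and then realizing $W$ as a monomial cyclic shift, or a two-term shift when squareness fails --- is structurally appealing and makes the role of the cyclic algebra explicit. The cost is the extra case analysis (parity of $n,k$, squareness of $\beta$, choice of a non-square $\mu\ne-1$) that the paper avoids entirely by letting most $v_i$ equal $1$ and exploiting a root-counting argument directly in $\mathbb{F}_q[X]$. In short: your proof buys conceptual clarity about why the roots-of-unity evaluation set works; the paper's buys brevity and uniformity.
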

\begin{proof}
Since $n>1$ is a divisor of $q-1$, there exists a primitive $n$th root of unity $\omega$ in $\mathbb{F}_q$.
Take $\mathbf{a}=(\omega^0,\omega^1,\cdots,\omega^{n-1})$ and let
$\mathbf{v}=(v_1,\cdots,v_{n-k+1},v_{n-k+2}, \cdots, v_n)$, where $v_i=1$ for $1\leq i\leq n-k+1$ and
$v_i\notin\{-1,0,1\}$ for $n-k+2\leq i\leq n$.
Consider the GRS code $\mathcal{C}$ of length $n$ over $\mathbb{F}_q$ associated with $\mathbf{a}$
and $\mathbf{v}$ as follows
\begin{equation*}
\mathcal{C}=
\Big\{\big(f(\omega^0), \cdots, f(\omega^{n-k}), v_{n-k+2}f(\omega^{n-k+1}),
\cdots,  v_nf(\omega^{n-1})\big)\,\Big{|}\,f(X)\in \mathbb{F}_q[X], \deg f(X)\leq k-1\Big\}.
\end{equation*}
We claim that  $\mathcal{C}\bigcap \mathcal{C}^\perp=\{\mathbf{0}\}$. To see this,
let
$$\big(f(\omega^0), \cdots, f(\omega^{n-k}), v_{n-k+2}f(\omega^{n-k+1}),
\cdots,  v_nf(\omega^{n-1})\big)$$
be an arbitrary element of $\mathcal{C}\bigcap \mathcal{C}^\perp$.
Consequently,
$$
G_2\big(f(\omega^0), \cdots, f(\omega^{n-k}), v_{n-k+2}^2f(\omega^{n-k+1}),
\cdots,  v_n^2f(\omega^{n-1})\big)^T=\mathbf{0},
$$
where $G_2$ is the $k\times n$ matrix
$$
G_2=\left(
  \begin{array}{cccc}
    1 & 1 & \cdots & 1 \\
    1 & \omega & \cdots & \omega^{n-1} \\
    \vdots & \vdots & \ddots & \vdots \\
    1 & \omega^{k-1} & \cdots & \omega^{(k-1)(n-1)} \\
  \end{array}
\right).
$$
At this point, it is easy to see that
$$
\left(
  \begin{array}{cccc}
    1 & 1 & \cdots & 1 \\
    1 & \omega & \cdots & \omega^{n-1} \\
    \vdots & \vdots & \ddots & \vdots \\
    1 & \omega^{n-2} & \cdots & \omega^{(n-2)(n-1)} \\
  \end{array}
\right)
\left(
  \begin{array}{c}
    1 \\
    \omega \\
    \vdots \\
    \omega^{n-1} \\
  \end{array}
\right)=\mathbf{0}.
$$
This implies that
 a polynomial $g(X)\in\mathbb{F}_q[X]$ with  $\deg g(X)\leq n-k-1$ can be found such that
\begin{equation}\label{divides}
\big(f(\omega^0), \cdots, f(\omega^{n-k}), v_{n-k+2}^2f(\omega^{n-k+1}),
\cdots,  v_n^2f(\omega^{n-1})\big)
=\big(g(\omega^0), \omega g(\omega^1), \cdots, \omega^{n-1}g(\omega^{n-1})\big).
\end{equation}
The first $n-k+1$ coordinates of (\ref{divides}) give
$
f(\omega^i)=\omega^ig(\omega^i)
$
for $0\leq i\leq n-k$.
Since $k\leq\lfloor n/2\rfloor$, $\deg f(X)\leq k-1\leq n-k-1$ and $\deg g(X)\leq n-k-1$ , we have $f(X)=Xg(X)$.
In particular, $\deg g(X)\leq k-2$.
By the last $k-1$ coordinates of (\ref{divides}), we have that for any $n-k+2\leq j\leq n$,
$$
v_j^2f(\omega^{j-1})=v_j^2\omega^{j-1}g(\omega^{j-1})=\omega^{j-1}g(\omega^{j-1}).
$$
It follows from $v_j^2\neq1$ that  $g(\omega^{j-1})=0$. In other words, $g(X)$ has $k-1$ distinct roots, giving $g(X)=0$.
Thus, $f(X)=0$. The proof is complete.
\end{proof}

The following theorem
particularly  indicates that $k$-dimensional LCD MDS codes  of length $q$ over $\mathbb{F}_q$ exist for any $1<k<q$.

\begin{Theorem}\label{power}
Let $q=p^e>3$, where $p$ is an odd prime number and $e\geq1$ is an integer.
Then there exists a $k$-dimensional LCD
GRS  code of length $n=p^\ell$ over $\mathbb{F}_q$, where $\ell$ is a  positive integer with $1\leq \ell\leq e$.
\end{Theorem}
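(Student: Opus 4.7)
The plan is to exploit the additive structure of $\mathbb{F}_q = \mathbb{F}_{p^e}$: since $\mathbb{F}_q$ is an $\mathbb{F}_p$-vector space of dimension $e \geq \ell$, one can fix an $\ell$-dimensional $\mathbb{F}_p$-subspace $H$ of $\mathbb{F}_q$, which has exactly $n = p^\ell$ elements. I would label $H = \{\alpha_1, \alpha_2, \dots, \alpha_n\}$ and take $\mathbf{a} = (\alpha_1, \alpha_2, \dots, \alpha_n)$. The decisive observation is that $H$ is closed under subtraction, so for each $i$ the set $\{\alpha_i - \alpha_j : j \neq i\}$ equals $H \setminus \{0\}$. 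Consequently
$$
u_i = \prod_{1 \leq j \leq n,\, j \neq i}(\alpha_i - \alpha_j)^{-1} = \Big(\prod_{h \in H \setminus \{0\}} h\Big)^{-1}
$$
is independent of $i$; call this common value $u$. Formula (\ref{important}) then gives $GRS_k(\mathbf{a}, \mathbf{1})^\perp = GRS_{n-k}(\mathbf{a}, u\mathbf{1}) = GRS_{n-k}(\mathbf{a}, \mathbf{1})$, the last equality holding because rescaling every codeword of a linear code by a common nonzero factor leaves the code unchanged.

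Next, since $q > 3$ is odd, I would pick any $\eta \in \mathbb{F}_q \setminus \{-1, 0, 1\}$ and set $\mathbf{v} = (v_1, \dots, v_n)$ with $v_i = 1$ for $1 \leq i \leq n-k$ and $v_i = \eta$ for $n-k+1 \leq i \leq n$. The code $\mathcal{C} = GRS_k(\mathbf{a}, \mathbf{v})$ is automatically $[n, k, n-k+1]$ MDS, and to show it is LCD I would take any $\mathbf{c} = (v_1 f(\alpha_1), \dots, v_n f(\alpha_n))$ in $\mathcal{C} \cap \mathcal{C}^\perp$ with $\deg f \leq k-1$. By Lemma \ref{lem} (using that here $\mathbf{u} = u \mathbf{1}$) there exists $g \in \mathbb{F}_q[X]$ with $\deg g \leq n-k-1$ satisfying $v_i^2 f(\alpha_i) = u\, g(\alpha_i)$ for every $i$. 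Writing $h(X) = u\, g(X)$, the first $n-k$ coordinates give $f(\alpha_i) = h(\alpha_i)$; since $k \leq \lfloor n/2 \rfloor$ forces $\max(\deg f, \deg h) \leq n-k-1$, agreement on $n-k$ distinct points implies $f = h$ as polynomials. Substituting back into the last $k$ equations yields $(\eta^2 - 1) f(\alpha_i) = 0$ for $n-k+1 \leq i \leq n$; since $\eta^2 \neq 1$, $f$ vanishes at $k$ distinct points while $\deg f \leq k-1$, forcing $f = 0$ and hence $\mathbf{c} = \mathbf{0}$.

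The heart of the argument, and the only step with any content, is the identification of the evaluation set with an $\mathbb{F}_p$-subspace of $\mathbb{F}_q$: this is what collapses the dual weight vector $\mathbf{u}$ to a scalar multiple of $\mathbf{1}$ and makes the LCD criterion of Lemma \ref{lem} tractable. I do not anticipate any further obstacle, aside from the routine range check $k \leq n-k$, which is exactly the standing hypothesis $k \leq \lfloor n/2 \rfloor$ needed to push degrees past the agreement count.
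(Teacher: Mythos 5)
Your proposal is correct and follows essentially the same route as the paper: you take the evaluation set to be an additive subgroup $H$ of order $p^\ell$ so that every $u_i$ collapses to the constant $\bigl(\prod_{z\in H\setminus\{0\}}z\bigr)^{-1}$, use the two-block weight vector with $1$'s and a scalar whose square is not $1$, and then run the standard degree count (agreement on the first $n-k$ points forces $f=u\,g$, after which the last $k$ coordinates force $f$ to vanish at $k$ points). The only difference is cosmetic — the paper leaves the final degree argument as ``similar to Theorem \ref{divides-theorem}'' while you carry it out explicitly — so nothing further is needed.
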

\begin{proof}
Let $H$ be an additive subgroup of $\mathbb{F}_{p^e}$ of order $n=p^\ell$, say $H=\{\alpha_1,\cdots, \alpha_n\}$.
Let $h$ be the product of all nonzero elements of $H$, namely
$$
h=\prod_{z\in H\setminus{\{0\}}}z.
$$
Take $\gamma\in \mathbb{F}_q^*$ with $\gamma^2\neq1$.
Let $\mathbf{a}=(\alpha_1,\alpha_2,\cdots,\alpha_n)$ and
let $$\mathbf{v}=\big(v_1, \cdots, v_{n-k}, v_{n-k+1}, \cdots,v_n\big),$$
where  $v_i=1$  for $1\leq i\leq n-k$ and $v_i=\gamma$ for $n-k+1\leq i\leq n$.
Consider the  GRS code $GRS_{k}(\mathbf{a},\mathbf{v})$ of length $n$ over $\mathbb{F}_q$ associated with $\mathbf{a}$
and $\mathbf{v}$, i.e.,
\begin{equation*}
GRS_{k}\big(\mathbf{a},\mathbf{v}\big)=\Big\{\big(f(\alpha_1), \cdots, f(\alpha_{n-k}),
\gamma f(\alpha_{n-k+1}),\cdots,\gamma f(\alpha_n)\big)\,\Big{|}\,f(X)\in \mathbb{F}_q[X], \deg f(X)\leq k-1\Big\}.
\end{equation*}
As in the proofs of the previous theorems,
we can show that $GRS_{k}(\mathbf{a},\mathbf{v})\bigcap GRS_{k}(\mathbf{a},\mathbf{v})^\perp=\{\mathbf{0}\}$.
Indeed, let $\mathbf{c}=\big(f(\alpha_1), \cdots, f(\alpha_{n-k}),
\gamma f(\alpha_{n-k+1}),\cdots,\gamma f(\alpha_n)\big)$ be an   element of
$GRS_{k}(\mathbf{a},\mathbf{v})\bigcap GRS_{k}(\mathbf{a},\mathbf{v})^\perp$.
Using Lemma \ref{lem}, we have that
a polynomial $g(X)\in\mathbb{F}_q[X]$ with  $\deg g(X)\leq n-k-1$ can be found such that
\begin{equation*}
\big(f(\alpha_1), \cdots,  f(\alpha_{n-k}), \gamma^2f(\alpha_{n-k+1}), \cdots, \gamma^2f(\alpha_n)\big)
=\big(u_1g(\alpha_1), u_2g(\alpha_2), \cdots, u_ng(\alpha_n)\big),
\end{equation*}
where $u_i=\prod_{1\leq j\leq n,j\neq i}(\alpha_i-\alpha_j)^{-1}$ for $1\leq i\leq n$.
However, for any $1\leq i\leq n$ we have
$$
u_i=\prod_{1\leq j\leq n,j\neq i}(\alpha_i-\alpha_j)^{-1}=h^{-1}.
$$
Therefore,
\begin{equation*}
\big(f(\alpha_1), \cdots,  f(\alpha_{n-k}), \gamma^2f(\alpha_{n-k+1}), \cdots, \gamma^2f(\alpha_n)\big)
=\big(h^{-1}g(\alpha_1), h^{-1}g(\alpha_2), \cdots, h^{-1}g(\alpha_n)\big).
\end{equation*}
The desired result can be obtained by applying arguments similar to those used in the proof of Theorem \ref{divides-theorem}.
We are done.
\end{proof}

Theorems \ref{divides-theorem} and \ref{power} tell us that
$k$-dimensional LCD MDS codes  of length $q-1$ and $q$ over $\mathbb{F}_q$ exist for any $k$, respectively.
The following result implies that if the code length $n$ is close to the alphabet size $q$,
then LCD GRS codes of length $n$ over $\mathbb{F}_q$ exist.
\begin{Theorem}\label{3.6}
Let $q>3$ be an odd  prime power and    let $n$ be a positive integer with $1<n<q$. If $1< k\leq \lfloor n/2\rfloor$ and $n+k\geq q+1$,
then  there exists a $k$-dimensional
LCD GRS code of length $n$ over $\mathbb{F}_q$.
\end{Theorem}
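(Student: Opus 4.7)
My plan is to mirror the two-value construction used in the proofs of Theorems \ref{generalized1}, \ref{divides-theorem}, and \ref{power}: take $\mathbf{a}=(\alpha_1,\ldots,\alpha_n)$ to be any $n$ distinct elements of $\mathbb{F}_q$, and choose $v_i=1$ on a subset $A\subseteq\{1,\ldots,n\}$ and $v_i=\gamma$ on its complement $B$, where $\gamma\in\mathbb{F}_q^*$ satisfies $\gamma^2\neq 1$ (such $\gamma$ exists since $q>3$). The proof hinges on picking $|A|$ and $|B|$ so that the two groups of equations coming from Lemma \ref{lem} jointly force the polynomial $f$ to vanish.

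The first technical step is to compute $u_i$ in closed form, exploiting $n<q$. Let $\beta_1,\ldots,\beta_{q-n}$ denote the elements of $\mathbb{F}_q\setminus\mathbf{a}$ and set $P(X)=\prod_{j=1}^{q-n}(X-\beta_j)$, $h(X)=\prod_{i=1}^{n}(X-\alpha_i)$. Since $X^q-X=h(X)P(X)$ and the derivative of $X^q-X$ equals $-1$ in characteristic $p$, I would deduce $h'(\alpha_i)P(\alpha_i)=-1$, whence $u_i=-P(\alpha_i)$. The key structural fact is that $\deg P=q-n\leq k-1$; the rest of the argument will be organized around this low-degree polynomial.

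By Lemma \ref{lem}, a nonzero codeword in the intersection would correspond to polynomials $f,g$ with $\deg f\leq k-1$, $\deg g\leq n-k-1$, and $v_i^2 f(\alpha_i)+P(\alpha_i)g(\alpha_i)=0$ for every $i$. I propose the asymmetric split $|A|=q-k$, $|B|=n-(q-k)=n+k-q$; both sizes lie in $[1,n-1]$ precisely because $q>n\geq 2k$ and $n+k\geq q+1$. On $A$ the equations read $(f+Pg)(\alpha_i)=0$; since $\deg(f+Pg)\leq\max(k-1,q-k-1)=q-k-1$ (using $2k\leq n<q$) and yet $f+Pg$ vanishes at $|A|=q-k$ distinct points, the polynomial $f+Pg$ must be identically zero, giving $f=-Pg$. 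Feeding this back into the $B$-equations yields $(\gamma^2-1)f(\alpha_i)=0$ for $i\in B$, so $f$ (and hence $g$, because $P(\alpha_i)\neq 0$ on $\mathbf{a}$) vanishes at all $n+k-q$ points of $B$.

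The closing step is a degree-versus-roots contradiction, which I expect to be the cleanest part: the identity $f=-Pg$ combined with $\deg f\leq k-1$ forces $\deg g\leq n+k-1-q$, while $g$ has $|B|=n+k-q$ distinct roots, one more than its degree allows, so $g=0$ and therefore $f=0$. The main obstacle is locating the correct split; the naive balanced choice $|A|\approx|B|\approx n/2$ does not produce enough zeros of $f+Pg$ to collapse it, and only the sharp choice $|A|=q-k$ uses the hypothesis $n+k\geq q+1$ exactly, leaving precisely $n+k-q\geq 1$ coordinates in $B$ to drive the final contradiction.
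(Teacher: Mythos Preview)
Your proposal is correct and follows essentially the same route as the paper's proof: the same computation $u_i=-P(\alpha_i)$ via $X^q-X=h(X)P(X)$, the same split with $|A|=q-k$ coordinates carrying $v_i=1$ so that $f+Pg$ collapses on $A$, and the same degree-versus-roots finish on the remaining $n+k-q$ coordinates forcing $g=0$. The only cosmetic difference is that the paper states the condition on the $B$-block as ``choose $v_i$ with $-v_i^2\prod_{j=n+1}^{q}(\alpha_i-\alpha_j)\neq u_i$,'' which, once one plugs in $u_i=-P(\alpha_i)$, reduces exactly to your cleaner requirement $v_i^2\neq 1$; your uniform choice $v_i=\gamma$ is thus a tidy specialization of the paper's construction.
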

\begin{proof}
Let $\mathbf{a}=(\alpha_1,\alpha_2,\cdots,\alpha_n)$,
where $\alpha_i$ ($1\leq i\leq n$) are distinct elements of $\mathbb{F}_q$.
Label  the elements of  $\mathbb{F}_q=\{\alpha_1,\alpha_2, \cdots,\alpha_n, \alpha_{n+1},\cdots, \alpha_q\}$.
Recall that
the dual of $GRS_{k}(\mathbf{a},\mathbf{1})$ is $GRS_{n-k}(\mathbf{a},\mathbf{u})$, where
$\mathbf{u}=(u_1,u_2,\cdots, u_n)$ with $u_i=\prod_{1\leq j\leq n,j\neq i}(\alpha_i-\alpha_j)^{-1}$.
Let
$$\mathbf{v}=\big(v_1,v_2,\cdots, v_{q-k}, v_{q-k+1}, \cdots,v_n\big),$$
where  $v_i=1$  for $1\leq i\leq q-k$ and
$v_i$, $q-k+1\leq i\leq n$, is chosen such that  $-v_i^2\prod_{j=n+1}^q(\alpha_i-\alpha_{j})\neq u_i$.
Consider the GRS code $GRS_k(\mathbf{a},\mathbf{v})$ of length $n$ over $\mathbb{F}_q$ associated with $\mathbf{a}$
and $\mathbf{v}$, i.e.,
\begin{equation*}
GRS_k(\mathbf{a},\mathbf{v})=
\Big\{\big(f(\alpha_1), \cdots, f(\alpha_{q-k}), v_{q-k+1}f(\alpha_{q-k+1}),
\cdots,  v_nf(\alpha_n)\big)\,\Big{|}\,f(X)\in \mathbb{F}_q[X], \deg f(X)\leq k-1\Big\}.
\end{equation*}
We are left to show that $GRS_k(\mathbf{a},\mathbf{v})\bigcap GRS_k(\mathbf{a},\mathbf{v})^\perp=\{\mathbf{0}\}$. To this end,
let
$$\mathbf{c}=\big(f(\alpha_1), \cdots, f(\alpha_{q-k}), v_{q-k+1}f(\alpha_{q-k+1}),
\cdots,  v_nf(\alpha_n)\big)$$
be an arbitrary element of $GRS_k(\mathbf{a},\mathbf{v})\bigcap GRS_k(\mathbf{a},\mathbf{v})^\perp$.
It follows from Lemma \ref{lem}  that a polynomial $g(X)\in\mathbb{F}_q[X]$ with  $\deg g(X)\leq n-k-1$ can be found such that
\begin{equation}\label{k+n}
\big(f(\alpha_1), \cdots, f(\alpha_{q-k}), v_{q-k+1}^2f(\alpha_{q-k+1}),
\cdots,  v_n^2f(\alpha_n)\big)
=\big(u_1g(\alpha_1), u_2g(\alpha_2), \cdots, u_ng(\alpha_n)\big).
\end{equation}
As a consequence, the first $q-k$ coordinates of (\ref{k+n}) give
$
f(\alpha_i)=u_ig(\alpha_i)
$
for $1\leq i\leq q-k$; equivalently,
$$
f(\alpha_i)=\prod_{1\leq j\leq n,j\neq i}(\alpha_i-\alpha_j)^{-1}g(\alpha_i)=-\prod_{j=n+1}^q(\alpha_i-\alpha_j)g(\alpha_i),
$$
where the last equality holds because
$$\prod_{1\leq j\leq q,j\neq i}(\alpha_i-\alpha_j)=-1.$$
This suggests that the polynomial $f(X)+\prod_{j=n+1}^q(X-\alpha_{j})g(X)$ has $q-k$ distinct roots $\alpha_1,\cdots,\alpha_{q-k}$.
Note that
$$
\deg f(X)\leq k-1\leq n-k-1\leq q-k-1
$$
and
$$
\deg \prod_{j=n+1}^q(X-\alpha_{j})g(X)\leq (q-n)+(n-k-1)=q-k-1.
$$
These  imply
the degree of $f(X)+\prod_{j=n+1}^q(X-\alpha_{j})g(X)$ is at most $q-k-1$.
We thus conclude that
$$f(X)=-\prod_{j=n+1}^q(X-\alpha_{j})g(X).$$
In particular,
$
\deg f(X)=q-n+\deg g(X)\leq k-1,
$
giving $\deg g(X)\leq n+k-q-1$. However, the last $n+k-q$ coordinates of (\ref{k+n}) imply that for any $q-k+1\leq i\leq n$,
$$
v_i^2f(\alpha_i)=u_ig(\alpha_i)=-v_i^2\prod_{j=n+1}^q(\alpha_i-\alpha_{j})g(\alpha_i).
$$
We have $g(\alpha_i)=0$ since $v_{i}$ is chosen such that $-v_i^2\prod_{j=n+1}^q(\alpha_i-\alpha_{j})\neq u_i$.
Therefore $g(X)$ has $n+k-q$ distinct roots, which forces $g(X)=0$.
We finally conclude that $f(X)=0$, as desired.
\end{proof}

We conclude this section with the following theorem.
\begin{Theorem}\label{last-theorem}
Let $q>3$ be an odd  prime power and    let $n$ be a positive integer with $1<n<q$.
If $1< k\leq \lfloor n/2\rfloor$ and $2n-k<q\leq2n$,
then  there exists a $k$-dimensional
LCD GRS code of length $n$ over $\mathbb{F}_q$.
\end{Theorem}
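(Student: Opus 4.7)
The plan is to follow the strategy of the proof of Theorem \ref{3.6}: choose an evaluation vector $\mathbf{a}$ and a weight vector $\mathbf{v}$ split into a ``main'' part with $v_i = 1$ and a ``correction'' part, then use Lemma \ref{lem} to reduce the LCD property to a polynomial identity which one shows has only the trivial solution. The new ingredient is that in this regime $m := q - n$ satisfies $m \ge n - k + 1 > k$; i.e.\ the degree of $P(X) := \prod_{j=n+1}^{q}(X - \alpha_j)$ \emph{exceeds} $\deg f$, which flips the key inequality. Concretely, let $\mathbf{a} = (\alpha_1,\ldots,\alpha_n)$ be any $n$ distinct elements of $\mathbb{F}_q$, label the complement $\{\alpha_{n+1},\ldots,\alpha_q\}$, set $P$ as above, and split $[n]$ as $S = \{1,\ldots,n-k\}$ and $T = \{n-k+1,\ldots,n\}$. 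Take $v_i = 1$ for $i \in S$ and $v_i = \gamma$ for $i \in T$, where $\gamma \in \mathbb{F}_q^*$ is to be chosen with $\gamma^2 \ne 1$ together with one further non-vanishing condition.

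Suppose $\mathbf{c} \in GRS_k(\mathbf{a},\mathbf{v}) \bigcap GRS_k(\mathbf{a},\mathbf{v})^{\perp}$. Lemma \ref{lem} supplies $f,g$ with $\deg f \le k-1$, $\deg g \le n-k-1$ and $v_i^2 f(\alpha_i) + P(\alpha_i) g(\alpha_i) = 0$ for all $i$. The decisive observation is that the identity $A(X) := f(X) + P(X) g(X) \equiv 0$ already forces $f \equiv g \equiv 0$: from $A = 0$ one has $f = -Pg$, and $\deg f \le k - 1 < m \le \deg(Pg)$ (for $g \ne 0$) then forces $g = 0$ and hence $f = 0$. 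Thus it is enough to prove $A \equiv 0$. From the $S$-equations, $A$ vanishes at the $n-k$ points of $S$, so $Q_S(X) := \prod_{i \in S}(X - \alpha_i)$ divides $A$; write $A = Q_S H$ with $\deg H \le m - 1$. Polynomial division by $P$ gives $f = A \bmod P$ and $g = (A-f)/P$, and the constraint $\deg f \le k-1$ yields $m - k$ linear conditions on the $m$ coefficients of $H$. The $T$-equations, rewritten via $A = f + Pg$ as $A(\alpha_i) = (1 - \gamma^2) f(\alpha_i)$ for $i \in T$, add $k$ further linear conditions. The result is a square $m \times m$ linear system $M(\gamma)\mathbf{h} = \mathbf{0}$ in the coefficients of $H$, and the code is LCD as soon as $\det M(\gamma) \ne 0$.

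The main step, and the expected main obstacle, is exhibiting $\gamma$ with $\det M(\gamma) \ne 0$. Viewed as a polynomial in $\gamma^2$, $\det M$ has degree at most $k$, since only the $k$ rows coming from $T$ depend on $\gamma^2$ and each does so affinely. To see that this polynomial is nonzero, inspect its leading coefficient: rescaling the $T$-equations and passing to the formal value $w := \gamma^{-2} = 0$ turns them into $f(\alpha_i) = 0$ for $i \in T$, and since $|T| = k > \deg f$ this forces $f \equiv 0$; the $S$-equations then give $g(\alpha_i) = 0$ for $i \in S$, and $|S| = n-k > \deg g$ forces $g \equiv 0$, so $H = 0$, proving that the leading coefficient is nonzero. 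Hence $\det M$ in $\gamma^2$ has at most $k$ roots, giving together with the excluded value $\gamma^2 = 1$ at most $k + 1$ forbidden values for $\gamma^2$; meanwhile $\mathbb{F}_q^*$ has $(q-1)/2$ nonzero squares, and the hypothesis $q > 2n - k \ge 3k$ (using $n \ge 2k$) ensures that a suitable $\gamma$ exists. The delicate point is making this count tight for small boundary cases such as $k=2$, $q=7$: there one uses either a finer analysis of $\det M$ (noting for instance that $\gamma^2 = 1$ is typically already a root, so it need not be counted separately) or allows the $v_i$'s on $T$ to take distinct values, which furnishes the extra flexibility needed to finish.
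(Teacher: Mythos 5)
Your route is genuinely different from the paper's, and most of it is sound. The reduction via Lemma \ref{lem} to the relation $v_i^2f(\alpha_i)+P(\alpha_i)g(\alpha_i)=0$ with $P(X)=\prod_{j=n+1}^{q}(X-\alpha_j)$ is correct; so is the observation that $f+Pg\equiv 0$ already forces $f=g=0$ because $\deg P=q-n>n-k\geq k>\deg f$; and the parametrization $A=Q_SH$ with $\deg H\leq q-n-1$ does yield a square system whose determinant is a polynomial of degree at most $k$ in $\gamma^2$ with nonvanishing leading coefficient (if the limit system has a kernel element then $f=0$, hence $Q_SH=Pg$ with $\gcd(P,Q_S)=1$ and $\deg H<\deg P$, hence $H=0$).

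The gap is in the final count, and it is a real one. You need a nonzero square $\gamma^2\neq1$ avoiding at most $k$ roots of $\det M$, which requires $(q-1)/2>k+1$, i.e.\ $q\geq 2k+5$; the hypotheses only guarantee $q\geq 2n-k+1\geq 3k+1$, and working through $k=2,3$ leaves exactly one unresolved case, $(q,n,k)=(7,4,2)$. Neither proposed remedy is carried out, and the first is false as stated: for the labeling $\mathbf{a}=(0,1,2,3)$ over $\mathbb{F}_7$ one computes $c\prod_{i=0}^{3}(X-i)\bmod P=c(2X^2+X+2)$, whose $X^2$-coefficient is $2c\neq 0$, so $\gamma^2=1$ is \emph{not} a root of $\det M$ and cannot be absorbed into the count. (Ironically this computation shows $\gamma=1$ itself works there, so the case can be closed by hand---but that verification is not in your proof.) The second remedy, distinct weights on $T$, would need the multivariate statement that a nonzero polynomial of degree at most $1$ in each of $v_3^2,v_4^2$ cannot vanish on a $2\times 2$ grid of admissible squares; again not supplied. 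By contrast, the paper sidesteps all counting: it takes $v_i=\prod_{j=1}^{n-k}(\alpha_i-\alpha_{n+j})$, so the relation becomes $\prod_{j=1}^{n-k}(\alpha_i-\alpha_{n+j})f(\alpha_i)=-\prod_{j=2n-k+1}^{q}(\alpha_i-\alpha_j)g(\alpha_i)$ at all $n$ points; both sides are polynomials of degree at most $n-1$ (this is where $q\leq 2n$ enters), hence identical, and coprimality forces $\prod_{j=1}^{n-k}(X-\alpha_{n+j})\mid g(X)$, whence $g=0$ by degree. You should either adopt that choice of $\mathbf{v}$ or finish the $(7,4,2)$ case explicitly.
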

\begin{proof}
Let $\mathbf{a}=(\alpha_1,\alpha_2,\cdots,\alpha_n)$,
where $\alpha_i$ ($1\leq i\leq n$) are distinct elements of $\mathbb{F}_q$.
Label  the elements of  $\mathbb{F}_q=\{\alpha_1,\alpha_2, \cdots,\alpha_n, \alpha_{n+1},\cdots, \alpha_q\}$.
The assumption $2n-k<q$ is equivalent to saying that $q-n>n-k$.
Let
$\mathbf{v}=\big(v_1, v_2,\cdots,v_n\big)$
with
$$
v_i=\prod_{j=1}^{n-k}(\alpha_i-\alpha_{n+j})
$$
for $1\leq i\leq n$.
Consider the GRS code $GRS_k(\mathbf{a},\mathbf{v})$ of length $n$ over $\mathbb{F}_q$ associated with $\mathbf{a}$
and $\mathbf{v}$, i.e.,
\begin{equation*}
GRS_k(\mathbf{a},\mathbf{v})=
\Big\{\big(v_1f(\alpha_1), v_2f(\alpha_2), \cdots, v_nf(\alpha_n)\big)\,\Big{|}\,f(X)\in \mathbb{F}_q[X], \deg f(X)\leq k-1\Big\}.
\end{equation*}
We claim that $GRS_k(\mathbf{a},\mathbf{v})\bigcap GRS_k(\mathbf{a},\mathbf{v})^\perp=\{\mathbf{0}\}$. To this end,
let
$$\mathbf{c}=\big(v_1f(\alpha_1), v_2f(\alpha_2), \cdots, v_nf(\alpha_n)\big)$$
be an arbitrary element of $GRS_k(\mathbf{a},\mathbf{v})\bigcap GRS_k(\mathbf{a},\mathbf{v})^\perp$.
It follows from Lemma \ref{lem}  that a polynomial $g(X)\in\mathbb{F}_q[X]$ with  $\deg g(X)\leq n-k-1$ can be found such that
\begin{equation*}
\big(v_1^2f(\alpha_1), v_2^2f(\alpha_2), \cdots, v_n^2f(\alpha_n)\big)
=\big(u_1g(\alpha_1), u_2g(\alpha_2), \cdots, u_ng(\alpha_n)\big).
\end{equation*}
This gives
$
v_i^2f(\alpha_i)=u_ig(\alpha_i)
$
for $1\leq i\leq n$.
By the definition of $v_i$,
$$
v_i^2f(\alpha_i)=\prod_{j=1}^{n-k}(\alpha_i-\alpha_{n+j})^2f(\alpha_i).
$$
On the other hand,
$$
u_ig(\alpha_i)=\prod_{1\leq j\leq n,j\neq i}(\alpha_i-\alpha_j)^{-1}g(\alpha_i)
=-\prod_{j=n+1}^q(\alpha_i-\alpha_j)g(\alpha_i).
$$
These lead to
\begin{equation}\label{last}
\prod_{j=1}^{n-k}(\alpha_i-\alpha_{n+j})f(\alpha_i)=-\prod_{j=n+(n-k+1)}^q(\alpha_i-\alpha_j)g(\alpha_i)
\end{equation}
for $1\leq i\leq n$.
At the moment we have
$$
\deg \Big(\big(\prod_{j=1}^{n-k}(X-\alpha_{n+j})\big)f(X)\Big)\leq(n-k)+(k-1)=n-1
$$
and
$$
\deg\Big(\big(\prod_{j=n+(n-k+1)}^q(X-\alpha_j)\big)g(X)\Big)\leq (q-n)-(n-k)+(n-k-1)=q-n-1\leq n-1,
$$
where the last inequality holds because $q\leq2n$ by our assumption.
Hence
the degree of
$$
\Big(\prod_{j=1}^{n-k}(X-\alpha_{n+j})\Big)f(X)+\Big(\prod_{j=n+(n-k+1)}^q(X-\alpha_j)\Big)g(X)
$$
is at most $n-1$.
We thus conclude from (\ref{last}) that
$$
\Big(\prod_{j=1}^{n-k}(X-\alpha_{n+j})\Big)f(X)=-\Big(\prod_{j=n+(n-k+1)}^q(X-\alpha_j)\Big)g(X).
$$
Obviously $\prod_{j=1}^{n-k}(X-\alpha_{n+j})$ is a divisor of $g(X)$. This forces $g(X)=0$, as $\deg g(X)\leq n-k-1$.
We are done.
\end{proof}

\noindent{\bf Acknowledgements}\quad
The research of Bocong Chen was supported by NSFC (Grant No.
11601158).
The research of Hongwei Liu was supported by NSFC (Grant No.
11171370) and self-determined research funds of CCNU from the colleges¡¯s basic research and operation of
MOE (GrantNo. CCNU14F01004).

\end{document}